\documentclass[11pt]{article}

\usepackage{orcidlink}
\usepackage{amsmath, amssymb, amsthm}
\usepackage{bbm}
\usepackage{graphicx}
\usepackage{fullpage}
\usepackage[round]{natbib}
\usepackage{hyperref}
\usepackage{xcolor}
\usepackage{booktabs}
\usepackage{algpseudocode}
\usepackage{algorithm}
\algrenewcommand\algorithmicrequire{\textbf{Input:}}
\algrenewcommand\algorithmicensure{\textbf{Output:}}
\usepackage{bm}

\title{
Langevin-Gradient Rerandomization
}
\author{
    Ant\^onio Carlos Herling Ribeiro Junior\thanks{Department of Statistics and Data Science, Carnegie Mellon University, Pittsburgh, Pennsylvania, United States}
}

\newtheorem{theorem}{Theorem}[section]

\newtheorem{assumption}[theorem]{Assumption}

\newcommand{\Z}{\mathcal{Z}}

\newcommand{\1}{\mathbbm{1}}

\newcommand{\Prob}{\mathbb{P}}
\newcommand{\var}{\text{Var}}
\newcommand{\varcr}{\text{Var}_{\text{CR}}}
\newcommand{\varlgr}{\text{Var}_{\text{LGR}}}
\newcommand{\cov}{\text{cov}}
\newcommand{\E}{\mathbb{E}}

\newcommand{\R}{\mathbb{R}}

\newcommand{\tauest}[1]{\widehat{\tau}_{\text{#1}}}

\newcommand{\mean}[1]{\overline{#1}}

\begin{document}

\maketitle

\begin{abstract}
    Rerandomization is an experimental design technique that repeatedly randomizes treatment assignments until covariates are balanced between treatment groups. Rerandomization in the design stage of an experiment can lead to many asymptotic benefits in the analysis stage, such as increased precision, increased statistical power for hypothesis testing, reduced sensitivity to model specification, and mitigation of p-hacking. However, the standard implementation of rerandomization via rejection sampling faces a severe computational bottleneck in high-dimensional settings, where the probability of finding an acceptable randomization vanishes. Although alternatives based on Metropolis-Hastings and constrained optimization techniques have been proposed, these alternatives rely on discrete procedures that lack information from the gradient of the covariate balance metric, limiting their efficiency in high-dimensional spaces. We propose Langevin-Gradient Rerandomization (LGR), a new sampling method that mitigates this dimensionality challenge by navigating a continuous relaxation of the treatment assignment space using Stochastic Gradient Langevin Dynamics. We discuss the trade-offs of this approach, specifically that LGR samples from a non-uniform distribution over the set of balanced randomizations. We demonstrate how to retain valid inference under this design using randomization tests and empirically show that LGR generates acceptable randomizations orders of magnitude faster than current rerandomization methods in high dimensions.

\end{abstract}

\section{Introduction}
Randomized controlled trials are widely regarded as the ``gold standard" for estimating causal effects in a
wide range of disciplines \citep{mosteller2002evidence, carlin2007introduction, druckman2012cambridge, athey2017econometrics, list2024experimental}. While complete randomization ensures that treatment and control groups are balanced on both observed and unobserved covariates on average, it does not guarantees balance in any single realization \citep{rubin2008comment, rosenberger2008handling}. In finite samples, chance imbalances in baseline covariates can significantly inflate the variance of treatment effect estimators and reduce the power of statistical tests \citep{morgan2012}. Moreover, with as the number of covariates increase, the higher the probability of imbalance between treatment groups \citep{krieger2019nearly, morgan2012}. To address such imbalances, other design of experiments can be considered, such as stratified \citep{miratrix2013adjusting, tabord2022stratification}, blocked \citep{pashley2021insights, pashley2022block}, matched-pair \citep{imai2008variance, bai2022optimality}, and adaptive designs \citep{rosenberger2002randomization, hu2006theory}. Another alternative is through rerandomization—a design strategy that repeatedly discards assignments failing to meet a pre-specified covariate balance criterion—has emerged as a powerful tool to improve estimation and inference efficiency \citep{morgan2012, li2018, li2020rerandomization}.

The concept of rerandomization has been commented on since the 1950s \citep{grundy1950restricted, jones1958inadmissible, savage1962foundations} and has been used by many since then \citep{bruhn2009} although it has not been formalized until the work of \cite{morgan2012}. After \cite{morgan2012}, rerandomization has been extended to many experimental designs, including 2K factorial, stratified, sequential, survey, and split plot \citep{branson2016improving, li2020factorial, zhou2018, yang2021rejective, shi2024rerandomization}. Using rerandomization in the design stage of an experiment has been shown to lead to many asymptotic benefits in the analysis stage, such as increased precision \citep{morgan2012, li2018, li2020rerandomization, wang2024, wang2025asymptotic}, increased statistical power of the hypothesis test \citep{branson2024power},  reduced sensitivity to model specification \citep{zhao2024b}, and thus mitigation of p-hacking \citep{lu2025rerandomization}.

However, rerandomization is typically implemented by acceptance-rejection sampling \citep{morgan2012, li2018, ribeiro2025does}. This implementation faces a severe ``curse of dimensionality." As the number of covariates increases, the probability of a random assignment satisfying the balance criterion vanishes exponentially, making the search for a valid allocation computationally prohibitive for even moderately high-dimensional settings.

Recent work has sought to mitigate this bottleneck using more sophisticated sampling techniques. Pair-Switching Rerandomization (PSRR) \citep{zhu2023} employs a Markov Chain Monte Carlo approach, iteratively swapping treatment assignment between pairs of units until the randomization is deemed balanced. While effective in low-dimensional settings, PSRR essentially operates as a local random walk with a fixed step size. In high-dimensional spaces, where the set of balanced randomizations represents a small region of the discrete hypercube, such local searches often fail to find acceptable assignments in a reasonable timeframe. On the other hand, Balanced Randomization via Integer Programming (BRAIN), a constrained optimization approach proposed by \cite{lu2025fast}, can be much faster in high dimensions, but remains restricted to discrete moves in the treatment assignment space, which prevents it from directly exploiting gradient information of the covariate imbalance metric.

In this paper, we propose Langevin Gradient Rerandomization (LGR). Our approach shifts the problem from a discrete to a continuous sampling task. By relaxing the binary treatment assignment into a continuous latent space, we utilize Stochastic Gradient Langevin Dynamics (SGLD) to follow the gradient of the covariate imbalance measure toward the set of balanced randomizations. Unlike the ``blind" search of rejection sampling, the ``random walk" of PSRR, or the discrete optimization of BRAIN, LGR uses the gradient of the covariate imbalance metric to guide the sampling process.

We make two primary contributions. First, we prove that LGR leads to an unbiased and more precise difference-in-means treatment effect estimator despite sampling non-uniformly from the set of balanced randomizations. Since we sample non-uniformly on the balanced set, we use Fisher randomization tests to conduct finite-sample valid inference. These properties align with the finite-sample inference guarantees established for PSRR and BRAIN. Second, we empirically show that LGR samples balanced randomizations orders of magnitude faster than existing methods as the dimension size increases.

The rest of this paper is divided as follows. In Section~\ref{sec:setup} we cover the basics of rerandomization. Next, in Section~\ref{sec:method} we define LGR, provide its statistical properties, and explain how to conduct finite-sample inference. We present our empirical results in Section~\ref{sec:simulations}. We conclude in Section~\ref{sec:conclusion}.

\section{Setup and Notation}\label{sec:setup}
Consider a randomized experiment with $n$ units, where $n_1$ units are assigned to treatment and $n_0 = n - n_1$ to control, with $r_z = n_z/n, z = 0,1$ the proportion of units under arm $z$. We observe a matrix of covariates $X = (X_1, \dots, X_d)$, where $X_j \in \R^n$. The treatment assignment vector is $\Z = (Z_1, \dots, Z_n)^\prime$, with $Z_i=1$ for treatment and $Z_i=0$ for control. Under the potential outcomes framework \cite{rubin1974}, let $Y_i(1)$ and $Y_i(0)$ be the potential outcomes for unit $i$ under treatment and control, and their respective vectors $Y(1) = (Y_1(1), \dots, Y_n(1))^\prime$ and $Y(0) = (Y_1(0), \dots, Y_n(0))^\prime$. The observed outcome for unit $i$ is $Y_i = Y_i(1)Z_i + Y_i(0)(1-Z_i)$. The sole source of randomness is treatment assignment $Z_i$, while the covariates and potential outcomes are fixed. Ultimately, the goal is to estimate the Average Treatment Effect (ATE), defined by $\tau = \frac{1}{n}\sum_{i=1}^{n}\tau_i = \frac{1}{n}\sum_{i=1}^{n}Y_i(1) - Y_i(0)$, where $\tau_i$ are the individual treatment effects. Following most of the rerandomization literature, we use the difference-in-means estimator to estimate the ATE
\begin{equation*}
    \tauest{} = \frac{1}{n_1}\sum_{i=1}^{n}Y_iZ_i -\frac{1}{n_0}\sum_{i=1}^{n}Y_i(1-Z_i).
\end{equation*}

Rerandomization ensures balance between the treatment and control groups with respect to the observed covariates. Covariate balance can be measured in various ways, but following most of the rerandomization literature, we focus on the Mahalanobis distance
\begin{equation*}
    M = \tauest{X}' \Sigma^{-1} \tauest{X} 
\end{equation*}
where 
\begin{equation*}
    \tauest{X} = \left(\mean{X}_1 - \mean{X}_0\right) = \frac{1}{n_1}X'Z - \frac{1}{n_0}X'(1_n-Z)
\end{equation*}
with $1_n \in \R^n$ is a vector whose coefficients are all equal to one, and $\Sigma = \cov\left(\tauest{X}\right)$. Under complete randomization, we can write $\Sigma = \frac{n}{n_1n_0}S^2_X$ where $S^2_X = \frac{1}{n-1}\sum_{i=1}^{n}(X_i - \mean{X})(X_i - \mean{X})'$ is the sample covariance matrix of the covariates.

In a rerandomized design, a randomization $Z$ is deemed balanced only if $M \leq a$, where $a>0$ is a pre-specified threshold. Hence, we define the set of balanced randomizations as $\Z_a = \{Z:M \leq a\}$. Asymptotically, the Mahalanobis distance follows a $ \chi^2_d$ distribution \citep{morgan2012}. Therefore, it is common to set $a$ based on a acceptance probability $p_a = \Prob(\chi^2_d \leq a)$, where $p_a$ is commonly chosen to be $0.01$ or $0.001$.

\section{Langevin-Gradient Rerandomization}\label{sec:method}
\subsection{The LGR Algorithm}

To address the computational bottleneck of finding a balanced assignment $Z\in \Z_a$ in high-dimensional settings, we propose Langevin-Gradient Rerandomization (LGR). Unlike rejection sampling, which blindly draws randomizations from all possible treatment assignments, LGR utilizes the gradient of the Mahalanobis distance with respect to a continuous relaxation of the treatment assignment to actively guide the sampling towards the set of balanced randomizations $\Z_a$.


We introduce a vector of latent scores $\theta \in \mathbb{R}^n$, which are mapped to ``soft" assignments $\tilde{z} \in (0, 1)^n$ via a temperature-scaled logistic function:
\begin{equation}\label{eq:soft_treatment}
    \tilde{z}_i(\theta_i) = \sigma_\delta(\theta_i) = \frac{1}{1 + \exp(-\theta_i / \delta)}
\end{equation}

where $\delta > 0$ controls the smoothness of the relaxation. This relaxation allows us to define a differentiable ``soft" Mahalanobis distance, which is the Mahalanobis distance calculated with respect to the soft assignments $\tilde{z}$. The gradient of $M$ with respect to the latent scores is calculated via the chain rule:
\begin{equation}\label{eq:gradient}
    \frac{\partial M}{\partial \theta_i} = \frac{\partial M}{\partial \tilde{z}_i} \cdot \frac{\partial \tilde{z}_i}{\partial \theta_i},
\end{equation}
where the derivative of the sigmoid is $\frac{\partial \tilde{z}_i}{\partial \theta_i} = \frac{1}{\delta} \tilde{z}_i (1 - \tilde{z}_i)$.

The gradient of $M$ with respect to the soft weights $\tilde{z}_i$, after applying the quotient rule to the difference-in-means terms, is:
\begin{equation}\label{eq:grad-M}
    \frac{\partial M}{\partial \tilde{z}_i} = 2 \left[ \widehat{\Sigma}^{-1} \Delta(\tilde{z}) \right]^\top \left[ \frac{1}{n_{\tilde{1}}}(X_i - \mean{X}_{\tilde{1}}) + \frac{1}{n_{\tilde{0}}}(X_i - \mean{X}_{\tilde{0}}) \right],
\end{equation}
where $n_{\tilde{1}} = \sum \tilde{z}_i$ and $n_{\tilde{0}} = \sum (1 - \tilde{z}_i)$ are the soft treatment group sizes, and $\mean{X}_{\tilde{1}}$ and $\mean{X}_{\tilde{0}}$ are the covariate means with respect to the soft assignments. 


The core of LGR, detailed in Algorithm \ref{alg:lgr}, is the iterative evolution of the latent scores $\theta$ using Stochastic Gradient Langevin Dynamics (SGLD). The algorithm is initialized with $\theta^{(0)} \sim N(0, I_n)$. At each iteration $t$, we update the scores according to:$$\theta^{(t)} \leftarrow \theta^{(t-1)} - \eta \nabla_\theta M\left(\theta^{(t-1)}\right) + \sqrt{2\eta\delta}\xi_t$$where $\eta > 0$ is the learning rate and $\xi_t \sim N(0, I_n)$ is standard Gaussian noise. This update rule consists of two competing forces: the gradient term $-\eta \nabla_\theta M$ drives the latent scores toward values that minimize covariate imbalance, while the noise term $\sqrt{2\eta\delta}\xi_t$ injects stochasticity. This stochastic component is crucial; it prevents the algorithm from collapsing into a deterministic optimization and prohibiting randomization-based inference.

\begin{algorithm}[h]
\caption{Langevin-Gradient Rerandomization}
\label{alg:lgr}
\begin{algorithmic}[1]
\Require 
    Covariates $X$, 
    Number of treated units $n_1$, 
    Balance threshold $a$,  
    Temperature $\delta$ (default value = 0.5),
    Learning rate $\eta$ (default value = 1)
\Ensure A balanced assignment $Z$ where $\sum Z_i = n_1$ and $M \le a$

\State \textbf{Initialize:} $\theta^{(0)} \sim N(0, I_n)$
\State Pre-compute $\widehat{\Sigma}^{-1}$

\While{$M > a$}
    \State \textbf{1. Discrete Projection Check}
    \State Let $\mathcal{S}$ be the indices of the $n_1$ largest elements of $\theta^{(t-1)}$
    \State Construct binary vector $Z$: $Z_i = 1$ if $i \in \mathcal{S}$, else $0$
    \State Compute $M$
    \If{$M \le a$}
        \State \Return $Z$ \Comment{Found balanced randomization}
    \EndIf

    \State \textbf{2. Soft Relaxation}
    \State $\tilde{z} \leftarrow \sigma_\delta(\theta^{(t-1)})$
    \State $n_{\tilde{1}} \leftarrow \sum \tilde{z}_i, \quad n_{\tilde{0}} \leftarrow n - n_{\tilde{1}}$

    \State \textbf{3. Gradient Computation}
    \State $\Delta \leftarrow \left(\frac{1}{n_{\tilde{1}}} X'\tilde{z}\right) - \left(\frac{1}{n_{\tilde{0}}} X'(\1_n-\tilde{z})\right)$
    \State $g_{M} \leftarrow 2 \cdot \widehat{\Sigma}^{-1} \Delta$
    \State Scaling vector $\gamma\in \mathbb{R}^n$ where $\gamma_i = \frac{1}{\delta} \tilde{z}_i (1 - \tilde{z}_i)$
    \State $\nabla_{\theta} \leftarrow (\text{Jacobian of } M \text{ w.r.t } \tilde{z}) \cdot \gamma$ \Comment{Using equation (\ref{eq:gradient})}

    \State \textbf{4. Update Step (SGLD)}
    \State $\xi_t \sim N(0, I_n)$
    \State $\theta^{(t)} \leftarrow \theta^{(t-1)} - \eta\nabla_{\theta} + \sqrt{2\eta\delta}\xi_t$
\EndWhile

\end{algorithmic}
\end{algorithm}

While the SGLD updates occur in the continuous latent space, our goal is a binary assignment. We construct a candidate binary assignment $Z$ by assigning the treatment to the units corresponding to the $n_1$ largest elements of $\theta^{(t)}$. If this candidate $Z$ satisfies the balance criterion $M \le a$, the algorithm terminates and returns $Z$.

The efficiency of LGR relies on the choice of the temperature $\delta$ and the learning rate $\eta$. We set the default temperature to $\delta = 0.5$. Since the latent scores are initialized from a standard normal distribution, $\theta^{(0)} \sim N(0, I_n)$, the input to the sigmoid function $\theta/\delta$ effectively follows a distribution with variance $1/\delta^2 = 4$. This scaling spreads the initial scores across the active domain of the logistic function, preventing the soft assignments from saturating at 0 or 1 too early, which would cause the gradients to vanish. We set the default learning rate to $\eta = 1$. This value provides a balanced step size that allows for convergence toward the balanced region while maintaining sufficient variance in the Langevin noise to ensure valid coverage of the assignment space.

\subsection{Statistical properties of LGR}

\begin{assumption}\label{ass:same_size}
$n_1 = n_0 = n/2$.
\end{assumption}

\begin{assumption}\label{ass:linearity}
$Y_i(Z_i) = \beta_0 + \beta^\prime X_i + \tau Z_i + \epsilon_i$, where $\beta_0 + \beta^\prime X_i$ is the linear projection of $Y_i(0)$ onto $(1, X)$ and $\epsilon_i$ is the deviation from the linear projection.
\end{assumption}

\begin{assumption}\label{ass:normality}
$\tauest{}$ and $\mean{X}_1 - \mean{X}_0$ are normally distributed.
\end{assumption}

\begin{theorem}[Unbiasedness]\label{thm:unbiased}
    Under Assumption~\ref{ass:same_size} and LGR, we have that the difference-in-means estimator is unbiased: $\mathbb{E}[\hat{\tau}] = \tau$.
\end{theorem}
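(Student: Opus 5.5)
The argument rests on a symmetry of the LGR sampler under swapping treatment and control: with $n_1=n_0$, the assignments $Z$ and $1_n-Z$ are produced with equal probability. Given that, unbiasedness follows by averaging over each pair $\{Z,1_n-Z\}$, as for classical rerandomization \citep{morgan2012}.

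\textbf{Symmetry of the output law.} Consider the map $\theta\mapsto-\theta$ on the latent chain.
\begin{itemize}
\item[(i)] The initialization $\theta^{(0)}\sim N(0,I_n)$ and the noise $\xi_t\sim N(0,I_n)$ are invariant under negation.
\item[(ii)] $\sigma_\delta(-\theta_i)=1-\sigma_\delta(\theta_i)$, so negating $\theta$ sends $\tilde z$ to $1_n-\tilde z$. This swaps $n_{\tilde 1}$ and $n_{\tilde 0}$ and sends $\Delta(\tilde z)$ to $-\Delta(\tilde z)$. The soft Mahalanobis distance is quadratic in $\Delta$, so it is even under this map, and therefore $\nabla_\theta M(-\theta)=-\nabla_\theta M(\theta)$.
\item[(iii)] Consequently the update $\theta\mapsto\theta-\eta\nabla_\theta M(\theta)+\sqrt{2\eta\delta}\,\xi$ commutes with negation. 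The whole trajectory $(-\theta^{(t)})_t$ has the same law as $(\theta^{(t)})_t$.
\item[(iv)] Under Assumption~\ref{ass:same_size}, the $n_1=n/2$ largest entries of $-\theta$ are exactly the $n/2$ smallest entries of $\theta$, ties occurring with probability zero. So the projection of $-\theta$ is $1_n-Z$, and the discrete $M$ is unchanged because $\tauest{X}(1_n-Z)=-\tauest{X}(Z)$.
\item[(v)] The stopping time is therefore the same along the trajectory and its negation, and the returned assignment satisfies $Z\overset{d}{=}1_n-Z$.
\end{itemize}

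\textbf{Marginal probabilities.} From $Z\overset{d}{=}1_n-Z$ we get $\Prob(Z_i=1)=\Prob(Z_i=0)=1/2$ for every $i$. Then
$$\E[\hat\tau]=\frac{2}{n}\sum_i\E[Z_i]\,Y_i(1)-\frac{2}{n}\sum_i\E[1-Z_i]\,Y_i(0)=\frac1n\sum_i\bigl(Y_i(1)-Y_i(0)\bigr)=\tau.$$
Equivalently, $\hat\tau(Z)+\hat\tau(1_n-Z)=2\tau$ when $n_1=n_0$.

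\textbf{Main obstacle.} The delicate step is making the equivariance precise for the actual gradient in Algorithm~\ref{alg:lgr}, in particular the quotient-rule terms in \eqref{eq:grad-M}, which involve $n_{\tilde 1}$ and $n_{\tilde 0}$. I will check directly that $M_{\text{soft}}(1_n-\tilde z)=M_{\text{soft}}(\tilde z)$, using $\Delta(1_n-\tilde z)=-\Delta(\tilde z)$. Differentiating this identity through $\theta\mapsto-\theta$ then gives the odd gradient without touching the explicit formula.

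I also need the algorithm to terminate almost surely, so that the law of the output is well defined. I will assume this, as the paper implicitly does, or else condition on termination; conditioning is harmless because the termination event is itself symmetric under negation.

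Assumption~\ref{ass:same_size} is essential. Without it, step (iv) fails: the complement $1_n-Z$ no longer has $n_1$ treated units, and the projection is not symmetric.
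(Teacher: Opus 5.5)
Your proposal is correct and follows essentially the same route as the paper's proof: negation symmetry of the Gaussian initialization and noise, oddness of $\nabla_\theta M$ from the evenness of the soft Mahalanobis distance under $\tilde z\mapsto 1_n-\tilde z$, invariance of the chain's law under $\theta\mapsto-\theta$, and the top-$n/2$ projection sending $-\theta$ to $1_n-Z$, which gives $\Prob(Z_i=1)=1/2$. Your version is somewhat tighter than the paper's. You handle the random stopping time and ties explicitly, and you state the conclusion correctly as $Z\overset{d}{=}1_n-Z$, whereas the paper writes $\Prob(Z=z)=1-\Prob(Z=z)$ and appeals to summing over permutations, neither of which is needed.
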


\begin{theorem}[Variance Reduction]\label{thm:variance}
    Under Assumptions~\ref{ass:same_size}-\ref{ass:normality} and LGR, we have that
    \begin{equation*}
        \frac{\varcr(\tauest{}) - \varlgr(\tauest{})}{\varcr(\tauest{})} \geq \left(1 - \frac{a}{d}\right)R^2
    \end{equation*}
    where
    \begin{equation*}
        R^2 = \frac{n}{n_1n_0}\frac{\beta^\prime S^2_X\beta}{\varcr(\tauest{})},
    \end{equation*}
    $\varcr(\tauest{})$ the variance of $\tauest{}$ under complete randomization, and $\varlgr(\tauest{})$ the variance of $\tauest{}$ under LGR.
\end{theorem}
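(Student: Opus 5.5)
We decompose the estimator via Assumption~\ref{ass:linearity} and then control each piece using only two facts about LGR: its output always lies in $\Z_a$, and its law is invariant under swapping the treatment and control labels. Under Assumption~\ref{ass:linearity}, $Y_i(1)-Y_i(0)=\tau$, so
\begin{equation*}
    \tauest{} = \tau + \beta'\tauest{X} + \tauest{\epsilon},
\end{equation*}
where $\tauest{\epsilon}$ is the difference in means of the residuals $\epsilon_i$. Because $\beta_0+\beta'X_i$ is the linear projection, $\epsilon$ is orthogonal to $(1,X)$ in the finite population. Hence under complete randomization $\cov(\tauest{X},\tauest{\epsilon})=0$, and
\begin{equation*}
    \varcr(\tauest{}) = \beta'\Sigma\beta + \varcr(\tauest{\epsilon}),
    \qquad
    \beta'\Sigma\beta = \frac{n}{n_1n_0}\beta' S^2_X\beta = R^2\,\varcr(\tauest{}).
\end{equation*}

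The plan is then to bound $\varlgr(\tauest{})$ above by $\beta'\E_{\text{LGR}}[\tauest{X}\tauest{X}']\beta + \var(\tauest{\epsilon})$, where the last variance should be no larger than its complete-randomization value.

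\textbf{Mean zero and the cross term.} Under Assumption~\ref{ass:same_size}, the LGR output law is invariant under $Z\mapsto 1_n-Z$. The initialization $\theta^{(0)}\sim N(0,I_n)$ and the Gaussian noise are symmetric under $\theta\mapsto-\theta$. With $\sigma_\delta(-\theta)=1-\sigma_\delta(\theta)$, the soft Mahalanobis distance is invariant under this relabeling when $n_1=n_0$, so the gradient field is odd. Also, the top-$n/2$ projection of $-\theta$ is the complement of the projection of $\theta$, and $M$ is unchanged because $\tauest{X}$ flips sign. This flip sends $\tauest{X}\mapsto-\tauest{X}$ and $\tauest{\epsilon}\mapsto-\tauest{\epsilon}$. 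Therefore $\E[\tauest{X}]=\E[\tauest{\epsilon}]=0$, which is exactly the reasoning behind Theorem~\ref{thm:unbiased}.

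The symmetry does not kill the cross term $\cov(\tauest{X},\tauest{\epsilon})$, since that product is invariant under the flip. Here we use Assumption~\ref{ass:normality}, interpreted as in \citet{morgan2012}: $(\tauest{\epsilon},\tauest{X})$ is jointly normal with zero covariance under the base randomization, so $\tauest{\epsilon}$ is independent of $\tauest{X}$. We then argue that selection by a rule depending on $Z$ only through covariate balance leaves the law of $\tauest{\epsilon}$ unchanged and uncorrelated with $\tauest{X}$. This gives
\begin{equation*}
    \varlgr(\tauest{}) = \beta'\E_{\text{LGR}}[\tauest{X}\tauest{X}']\beta + \varcr(\tauest{\epsilon}).
\end{equation*}

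\textbf{Bounding the covariate term.} Every LGR output satisfies $M=\tauest{X}'\Sigma^{-1}\tauest{X}\le a$. Set $W=\Sigma^{-1/2}\tauest{X}$, so that $\|W\|^2\le a$. The aim is to show $\E_{\text{LGR}}[WW']\preceq (a/d)I_d$; this would give $\beta'\E_{\text{LGR}}[\tauest{X}\tauest{X}']\beta\le (a/d)\beta'\Sigma\beta$. Combining with the variance decomposition,
\begin{equation*}
    \varcr(\tauest{})-\varlgr(\tauest{}) \ge \left(1-\frac{a}{d}\right)\beta'\Sigma\beta = \left(1-\frac{a}{d}\right)R^2\,\varcr(\tauest{}),
\end{equation*}
which is the claim.

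\textbf{Main obstacle.} The constraint alone only gives $\operatorname{tr}\E[WW']\le a$. To get the matrix bound $(a/d)I_d$ one needs rotational symmetry of $W$. For rejection sampling this follows from normality, but LGR samples non-uniformly from $\Z_a$. My first attempt would be to argue via Assumption~\ref{ass:normality} that the LGR law of $W$ is spherically symmetric, since the dynamics depend on $X$ only through the Mahalanobis geometry, which is invariant under rotations of the whitened covariates. If that fails, I would fall back on the weaker statement with $\beta'\Sigma\beta$ replaced by its average over directions, or accept the bound $\E[(\beta'\tauest{X})^2]\le a\,\beta'\Sigma\beta$. The latter yields the theorem only when $d=1$.
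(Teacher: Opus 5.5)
Your reduction matches the paper's: write $\tauest{}=\tau+\beta'\tauest{X}+\widehat{\tau}_{\epsilon}$, use Assumption~\ref{ass:normality} to decouple the residual term, and bound the covariate term through $W=\Sigma^{-1/2}\tauest{X}$ with $\|W\|^2\le a$. The gap is the step you flag yourself, $\E_{\text{LGR}}[WW']\preceq (a/d)I_d$, and your route to it fails. Invariance of the soft and hard Mahalanobis distance under linear reparametrizations of the covariates gives only equivariance. If you replace the whitened covariates $\tilde X$ by $\tilde XQ$ with $Q$ orthogonal, the LGR law of $Z$ is unchanged, so the law of $W$ becomes the law of $Q'W$. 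That compares two different populations and says nothing about the law of $W$ for the one covariate matrix you have. Assumption~\ref{ass:normality} cannot supply the symmetry either, because it describes complete randomization. The LGR law of $Z$ reweights that law by a weight that depends on the individual rows $X_i$ (through the per-unit gradients and the top-$n_1$ rounding), not on $W$ alone. Your fallbacks prove weaker statements. For example, Cauchy--Schwarz gives only $(1-a)R^2$, which is vacuous at the paper's thresholds once $d$ is moderate ($a\approx 70$ at $d=100$).

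The paper handles this step by noting that $M=W'W$ is unchanged when coordinates of $W$ are permuted or sign-flipped. It concludes that the LGR law of $W$ has the same invariances, so $\var(W_j)=\E(M)/d\le a/d$ and $\cov(W_i,W_j)=0$. That target is weaker than yours, since signed permutations suffice and full rotations are not needed. But inferring invariance of the law from invariance of the criterion is valid for rejection sampling from a symmetric Gaussian base law, not for a trajectory-dependent sampler such as LGR. So the paper is in effect making an unstated assumption at exactly the point you identified.

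The same applies to your residual step. ``Selection depending on $Z$ only through covariate balance'' describes rejection sampling, not LGR: its stopping check depends only on $M$, but its candidate assignments do not. The paper likewise just asserts that LGR does not affect $\mean{\epsilon}_1-\mean{\epsilon}_0$.

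To make either argument a proof you need an explicit extra hypothesis. One option: the LGR law of $Z$ has a density with respect to complete randomization of the form $g(W)$, with $g$ invariant under signed permutations of coordinates (e.g.\ a function of $M$). Under joint normality this gives both required facts: $\widehat{\tau}_{\epsilon}$ keeps its complete-randomization law independently of $W$, and $\cov(W)=(\E M/d)I_d\preceq (a/d)I_d$.
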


Taken together, Theorems~\ref{thm:unbiased}-\ref{thm:variance} show that, from a design-based perspective, LGR enjoys the same key finite-sample properties as existing rerandomization schemes such as PSRR and BRAIN. Thus, the main distinction between LGR and these approaches lies in the mechanism used to sample the space of balanced assignments, where LGR replaces discrete local moves with a gradient-guided Langevin sampler.

\subsection{Inference under LGR}

A key challenge in rerandomization designs that do not sample uniformly from the balanced set $\Z_a$—such as PSRR, BRAIN, and our proposed LGR—is that standard asymptotic results for rerandomization (e.g., \cite{li2018}) may not hold. Specifically, because LGR utilizes gradient information to steer the sampling, the resulting distribution of balanced assignments $P(Z|Z \in \Z_a)$ is not necessarily uniform. Consequently, standard theoretical results on asymptotic distributions derived in \cite{morgan2012, li2018} are not directly applicable.

To ensure valid inference without relying on asymptotic approximations that may be violated by the non-uniform sampling, we employ Fisher Randomization Tests (FRT). The FRT provides exact finite-sample inference by simulating the distribution of the test statistic under the sharp null hypothesis of no treatment effect, conditional on the specific randomization mechanism employed.

Given the vectors of potential outcomes $Y(1)$ and $Y(0)$, we wish to test the sharp null hypothesis $H_0: Y_i(1) = Y_i(0)$ for all $i=1, \dots, n$. Under $H_0$, the observed outcomes $Y$ are fixed regardless of the treatment assignment. We define a test statistic: $T(Z, Y)$. After sampling $B \gg 0 $ balanced randomizations $Z^{(1)}, \dots, Z^{(B)} \in \Z_a$, we define the p-value as
\begin{equation*}
    p = \frac{1}{B+1} \left[1 + \sum_{b=1}^{B} \1 \left( |T(Z^{(b)}, Y)| > |T(Z, Y)| \right)\right]
\end{equation*}

where $\1$ is the indicator function.



To construct confidence intervals for the average treatment effect, we invert this test. While the standard FRT tests the sharp null hypothesis of no effect ($H_0: \tau = 0$), we can extend this to test any constant additive treatment effect hypothesis $H_0(\tau_0): Y_i(1) - Y_i(0) = \tau_0$ for all $i$, $\tau_0 \in \R$.

Under the hypothesis $H_0(\tau_0)$, the missing potential outcomes are imputed as:
\begin{equation*}
    Y_i(0) = \begin{cases} Y_i & \text{if } Z_i = 0 \\ Y_i - \tau_0 & \text{if } Z_i = 1 \end{cases} \quad \text{and} \quad Y_i(1) = \begin{cases} Y_i + \tau_0 & \text{if } Z_i = 0 \\ Y_i & \text{if } Z_i = 1 \end{cases}
\end{equation*}

This allows us to construct the full vector of potential outcomes under the null and generate the reference distribution of the test statistic $T(Z, Y(\tau_0))$ by repeatedly applying the LGR algorithm. The $(1-\alpha)$ confidence interval is defined as the set of all values $\tau_0$ such that the null hypothesis $H_0(\tau_0)$ is not rejected at level $\alpha$:
\begin{equation*}
    \text{CI}_{1-\alpha} = \{ \tau_0 : p(\tau_0) \ge \alpha \}
\end{equation*}
where $p(\tau_0)$ is the p-value obtained from the FRT under the hypothesized effect $\tau_0$. In practice, this interval is approximated via a grid search over plausible values of $\tau$. While this approach is computationally intensive for standard rejection sampling, the efficiency of LGR renders this inversion feasible even in high-dimensional settings.

\section{Simulations}\label{sec:simulations}

We evaluate the performance of LGR against standard Complete Randomization (CR), Acceptance Rejection Sampling Rerandomization (ARR), Pair-Switching Rerandomization (PSRR), and the BRAIN algorithm. We compare them in terms of (i) computational time (in seconds) to find a balanced randomization -- where for CR is simply the time to draw a randomization and serve as a benchmark,  (ii) bias and standard deviation of the treatment effect estimator, (iii) coverage and power of confidence intervals. We implement the PSRR algorithm as described in Algorithm 1 of \cite{zhu2023}), BRAIN according to Algorithm 1 of \cite{lu2025rerandomization}. In both of them, we use the default suggested values for the tuning parameters. We also use the default parameters' values to report the results of LGR. We run the simulations in a MacBook Air (Apple M2 Chip, 24 GB Memory, 8 Cores)\footnote{All code for the simulations is available in this \href{https://github.com/antonioherling/lgr-rerandomization}{github repository}.}.

We simulate a setting with $d$ covariates generated from a multivariate normal distribution $X \sim N(0, I_d)$. 
The outcome follows a linear model $Y_i(Z_i) = \sum_{j=1}^{d}X_{ij} + \tau Z_i + \epsilon_i$, where $\tau = 0.5$ and $\epsilon_i \sim N(0,1)$. We set $n_1 = n_0 = n/2$ and $n = 500$. For the rerandomization, we consider the acceptance probability $p_a = 0.01$, so the threshold $a$ is the $p_a$-th quantile of a $\chi^2_d$ distribution.

We perform Fisher randomization tests with a significance level $\alpha = 0.05$ and construct confidence intervals with a nominal coverage rate of $95\%$. To estimate computational runtime and the bias and standard deviation of the treatment effect estimator, we conduct $B=1000$ independent simulations. Within each simulation, we utilize the specific rerandomization algorithm to generate the balanced treatment assignment. To conduct randomization-based inference and construct confidence intervals, we generate a reference set of $B_{\text{frt}}=100$ additional balanced randomizations for each simulation to approximate the null distribution. Since PSRR and ARR scale with the dimension size, we do not report any results that depend on randomization-based inference for these methods.

In Figure~\ref{fig:competitor_lowdimension}, we consider when $d$ is small. In the left panel, each line considers the average time in seconds to find a balanced randomization by each rerandomization method as a function of the dimension $d$, or the time to draw a randomization completely at random. The shade around each line is a bootstrap 95\% confidence interval calculated from the $B=1000$ randomizations. The right panel represents the bias of the treatment effect estimator as a function of the dimension size. The lines are the average bias. The shaded region represents the standard deviation of the estimated bias. We find that initially, ARR is the slowest to find a balanced randomization, followed by our proposed method. However, as the dimension increases, PSRR turns into the slowest rerandomization method, and our proposed method is the fastest between the rerandomizations. All of the methods have similar biases and standard deviations of the estimated treatment effect, which are lower that the standard deviation from CR.

\begin{figure}[h]
\centering
\includegraphics[width=0.85\textwidth]{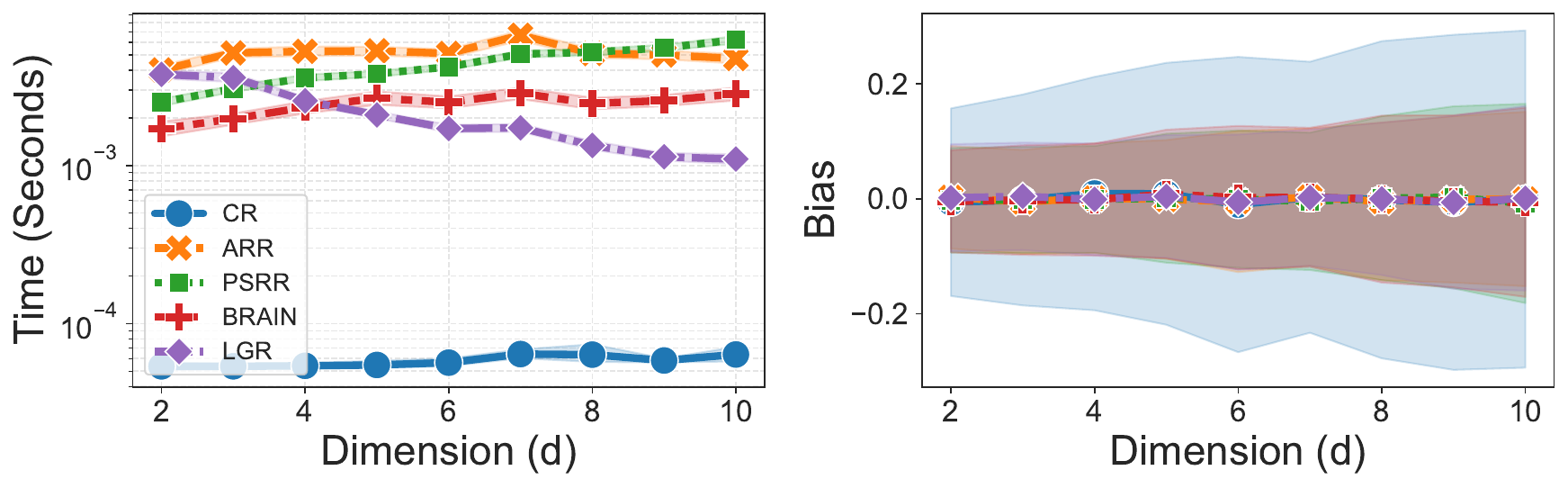}
\caption{The left panel represents the time to generate a balanced randomization using each rerandomization method or a randomization completely at random (blue line). The shaded region around each line is a bootstrap 95\% confidence interval. The right panel shows the bias (lines) and standard deviation (shaded region) of the difference-in-means treatment effect estimator with each rerandomization method. We find that for the smallest dimensions ARR is the slowest method to find a balanced randomization, followed by our proposed method. However, as the dimension size increases, PSRR turns into the slowest method and LGR is the fastest method.}
\label{fig:competitor_lowdimension}
\end{figure}

We extend this simulation to higher dimension size, and present it in Figure~\ref{fig:competitor_highdimension}.  We find that our proposed method is the fastest to find a balanced randomization while PSRR is the slowest. Interestingly, our method presents a U-shape curve in the left plot. This might happen because calculating the gradient of the soft relaxation is an overhead in low dimensions making the algorithm slower, but proves to be computationally more efficient in higher dimensions. On the other hand, PSRR works as a ``random walk" on the treatment space with a step of size one (only one unit on each treatment arm is swapped at each iteration). Hence, it takes longer to find the balanced region in the treatment space.

\begin{figure}[h]
\centering
\includegraphics[width=0.85\textwidth]{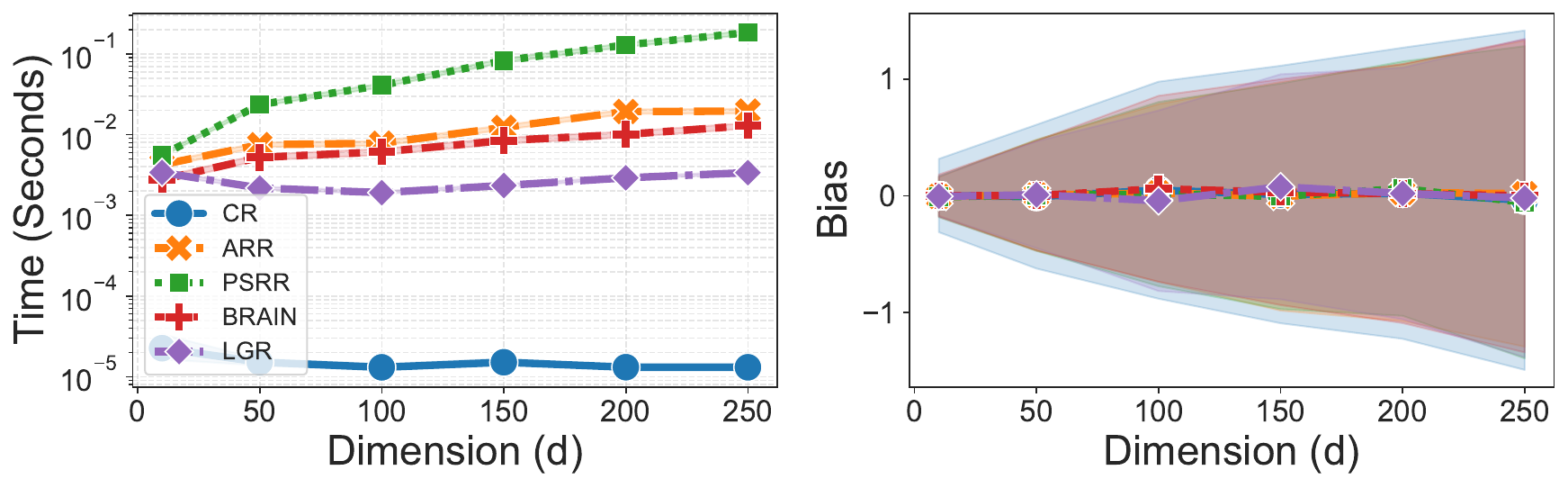}
\caption{The left panel represents the time to generate a balanced randomization using each rerandomization method or a randomization completely at random (blue line). The shaded region around each line is a bootstrap 95\% confidence interval. The right panel shows the bias (lines) and standard deviation (shaded region) of the difference-in-means treatment effect estimator with each rerandomization method. For the higher dimensions, LGR is the fastest to generate a balanced randomization. Interestingly, it has a U-shape, suggesting that for low dimensions the overhead of calculating gradients, while for it is computationally beneficial to calculate the gradients.}
\label{fig:competitor_highdimension}
\end{figure}

Next, we do randomization-based inference with LGR, BRAIN, and CR and present the results in Figure~\ref{fig:inference}. In the left panel, we show the coverage probability of each method as a function of the dimension size $d$. Nominal coverage at 95\% is denoted by the dashed horizontal black line. While on the right panel, we show the power of each method as a function of the dimension size. Notice that all method achieve nominal coverage, while BRAIN and LGR are more powerful than CR, following the rerandomization literature.

\begin{figure}[h]
\centering
\includegraphics[width=0.80\textwidth]{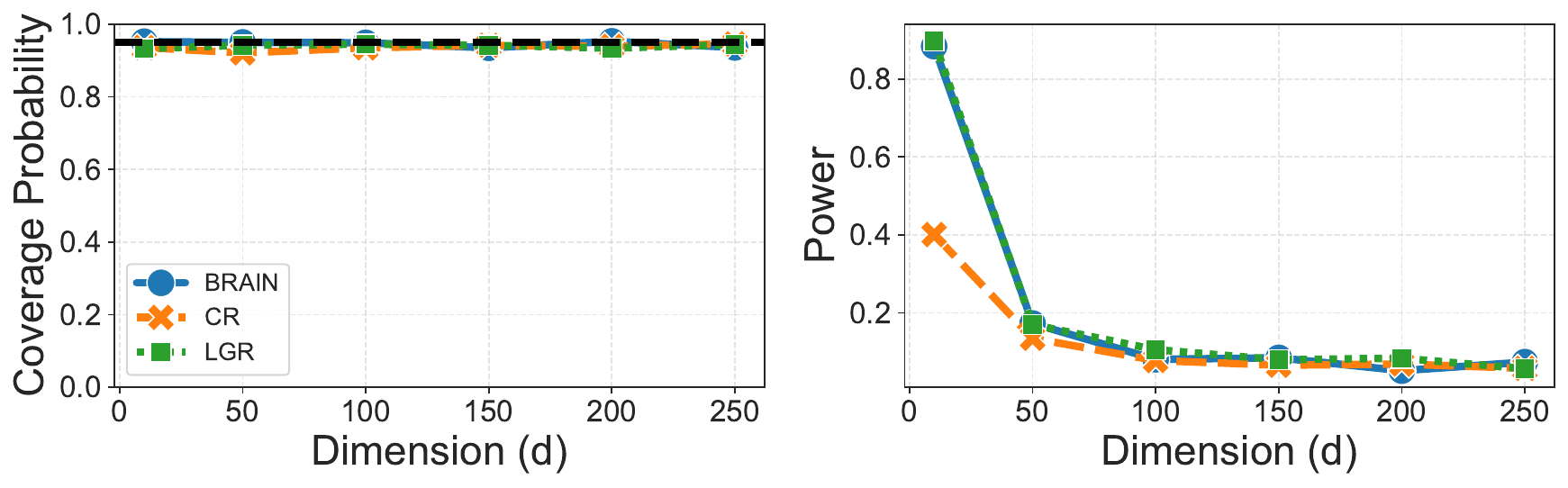}
\caption{The left panel represents the coverage probability of each method. Nominal coverage is represented by the dashed horizontal black line at 95\%. Meanwhile the right panel shows the power. All methods achieve nominal coverage, while the rerandomization methods are more powerful than complete randomization.}
\label{fig:inference}
\end{figure}

In Appendix \ref{app:simulations}, we conduct a sensitivity analysis of our proposed rerandomization method with respect to its temperature $\delta$ and learning rate $\eta$ parameters. We find that extreme values of the parameters can be detrimental to the rerandomization method and make it slower than the established rerandomization algorithms in the literature. 

\section{Conclusion}\label{sec:conclusion}
Rerandomization is a powerful tool for improving precision in randomized experiments by achieving covariate balance at the design stage \cite{morgan2012, li2018}. However, the standard implementation via acceptance–rejection sampling suffers from an exponential computational bottleneck as the dimension of covariates increases \cite{ribeiro2025does}. Although recent alternatives such as PSRR and BRAIN have been proposed to mitigate this curse of dimensionality, these methods rely on discrete procedures—iterative local swaps or constrained optimization—that lack direct guidance from gradient information, limiting their efficiency in high-dimensional spaces.

In this paper, we propose Langevin-Gradient Rerandomization, a novel rerandomization method that exploits a continuous relaxation of the treatment assignment space to navigate toward balanced randomizations via Stochastic Gradient Langevin Dynamics. By reformulating the discrete assignment problem into a continuous optimization landscape, we leverage gradient information about covariate imbalance to guide the search process efficiently. We proved that despite sampling non-uniformly from the set of balanced randomizations, the difference-in-means treatment effect estimator remains unbiased (Theorem~\ref{thm:unbiased}) and achieves variance reduction comparable to standard rerandomization methods (Theorem~\ref{thm:variance}). To ensure valid finite-sample inference under non-uniform sampling, we employed Fisher Randomization Tests with confidence interval inversion, providing exact hypothesis tests conditional on the LGR sampling mechanism. Extensive empirical simulations across dimensions demonstrate that LGR achieves orders of magnitude speedups over existing methods in high-dimensional settings while maintaining unbiased and precise estimation as previous rerandomization methodologies. LGR leads to nominal coverage inference as powerful as previous rerandomization methods.

Future research directions include extending LGR to more general differentiable covariate balance metrics beyond the Mahalanobis distance, such as quadratic forms \citep{schindl2024}. Additionally, LGR could be generalized to other experimental designs where rerandomization is required, such as sequential designs where units arrive over time \cite{zhou2018} and cluster randomized trials where balance must be maintained within and between clusters \citep{lu2023}. Further refinements could include adaptive learning strategies to adjust LGR's parameters based on gradient magnitudes and covariance structure, eliminating the need for manual hyperparameter tuning.

\clearpage

\bibliographystyle{plainnat}
\bibliography{main.bib}

\appendix


\section{Proofs}\label{app:proofs}
\subsection{Proof of Theorem~\ref{thm:unbiased}}\label{proof:unbiased}
\begin{proof}
Let $\Omega$ denote the set of all possible trajectories (chains) of the latent variable vectors generated by the LGR algorithm: $\left(\theta^{(0)},...,\theta^{(t)},...,\theta^{(T)}\right),$ where $\theta^{(t)}\in\mathbb{R}^{n}$. Let $\pi$ be any permutation of the indices $\{1,...,n\}$. Consider any realized chain $\vartheta=\left(\vartheta^{(0)},...,\vartheta^{(T)}\right)\in\Omega.$ We compare the probability of this chain to its negation counterpart $-\vartheta=\left(-\vartheta^{(0)},...,-\vartheta^{(T)}\right)$.

The algorithm initializes $\theta^{(0)}\sim N(0,I_{n})$. The standard multivariate normal distribution is exchangeable, $\Prob\left(\theta^{(0)}=\vartheta^{(0)}\right)=\Prob\left(\theta^{(0)}=-\vartheta^{(0)}\right)$.

The transition from $\theta^{(t-1)}$ to $\theta^{(t)}$ is governed by the SGD update:
\begin{equation*}
    \theta^{(t)}=\theta^{(t-1)}-\nabla J\left(\theta^{(t-1)}\right)+\xi_{t} \quad \xi_{t}\sim N(0,I_{n}).
\end{equation*}

The objective function $M(\theta)$ (soft Mahalanobis distance) is a Mahalanobis distance constructed from a sigmoid function $\sigma$. Hence, $M(\theta)$ is an even function, and its gradient is an odd function, meaning $\nabla M(-\theta) = - \nabla M(\theta)$.

As in the initialization of he algorithm, since the noise $\xi_{t}$ is also isotropic (exchangeable), the transition probability density satisfies: 
\begin{equation*}
    \Prob\left(\theta^{(t)}=\vartheta^{(t)}|\theta^{(t-1)}=\vartheta^{(t-1)}\right)=\Prob\left(\theta^{(t)}=-\vartheta^{(t)}|\theta^{(t-1)}=-\vartheta^{(t)}\right)
\end{equation*}

Combining all of these
\begin{align*}
\Prob(\text{Chain}=\vartheta) &= \Prob\left(\theta^{(0)}\right)\prod_{t=1}^{T}\Prob\left(\theta^{(t)}|\theta^{(t-1)}\right) \\
&= \Prob\left(-\theta^{(0)}\right)\prod_{t=1}^{T}\Prob\left(-\theta^{(t)})|-\theta^{(t-1)}\right) \\
&= \Prob(\text{Chain}=-\vartheta)
\end{align*}

This implies that for every trajectory the algorithm takes, the negative trajectory is equally likely. The final assignment $Z$ is a deterministic function of the final state $\theta^{(T)}$ (specifically, $Z_{i}=1$ if $\theta_{i}^{(T)}$ is among the top $n_{1}$ values). If $\theta_{i}^{(T)}$ is in the top $n_{1}$, then the $i$-th element of $-\theta^{(T)}$ is in the bottom $n_{1}$. Thus, $Z\left(-\theta^{(T)}\right)=1-Z\left(\theta^{(T)}\right)$ and the marginal distribution of the final assignment $Z$ must be
$$\Prob(Z=z)=1-\Prob(Z=z)$$

Summing over all possible permutations $\pi,$ the marginal probability that any specific unit $i$ is assigned to treatment must be identical for all units:
$$\Prob(Z_{i}=1)=\Prob(Z_{i}=0)=\frac{1}{2}, \forall i = 1,\dots,n$$

Finally, with $\E(Z_i) = \Prob(Z_{i}=1)=
1/2$, the expectation of the difference-in-means estimator is:
\begin{align*}
\E[\hat{\tau}] &= \E\left[\frac{2}{n}\sum_{i=1}^{n} Z_{i}Y_{i}(1)-\frac{2}{n}\sum_{i=1}^{n}(1-Z_{i})Y_{i}(0)\right] \\
&= \frac{2}{n}\sum_{i=1}^{n} Y_{i}(1) \E(Z_{i}) -\frac{2}{n}\sum_{i=1}^{n}Y_{i}(0)\E(1-Z_{i}) \\
&= \frac{1}{n}\sum_{i=1}^{n}(Y_{i}(1)-Y_{i}(0))=\tau.
\end{align*}

\end{proof}

\subsection{Proof of Theorem~\ref{thm:variance}}\label{proof:variance}
\begin{proof}
We can define
\begin{equation*}
    W = \sqrt{\frac{n}{n_1n_0}}S^{-1}_X\left(\mean{X}_1 - \mean{X}_0\right) = \sqrt{\frac{n}{n_1n_0}}S^{-1}_X X^\prime\left(Z - \frac{n_1}{n}\1_n\right)
\end{equation*}
and $M = W'W = \sum_{j=1}^{d}W_j^2$. Therefore, under Assumption~\ref{ass:same_size}, $\E(W) = 0$ and $\E(W_j) = 0, \forall j = 1,\dots,d$.

Note that if we exchange any $W_i$ and $W_j$, $M$ does not change. As a result, we have
\begin{equation*}
    \var(W_j) = \E(W_j^2) = \frac{1}{d}\E\left[\sum_{j=1}^{d}W_j^2\right] = \frac{1}{d}\E(M) \leq \frac{a}{d}.
\end{equation*}

Moreover, if we change the sign of any $W_j$, the Mahalanobis distance $M$ does not change. Hence this does not affect the distribution of W, and the distribution of $W_j$. Thus, $W_j|W_i$ and $-W_j|W_i$ are identically distributed. This implies,
\begin{equation*}
    \cov(W_j, W_i) = \E(W_jW_i) = \E\left[\E\left[W_jW_i|W_i\right]\right]= \E\left[W_i\E\left[W_j|W_i\right]\right]= \E\left[W_i \times 0\right] = 0
\end{equation*}

Therefore, $\cov(W) = \var(W_j)I_d$,
\begin{align*}
    \cov\left(\mean{X}_1 - \mean{X}_0\right) &= \cov\left(\sqrt{\frac{n_1n_0}{n}}S_XW\right)\\
    &= \frac{n_1n_0}{n}S_X\cov(W)S_X \\
    &= \frac{n_1n_0}{n}S_X\var(W_j)S_X \\
    & \leq \frac{a}{d} \frac{n_1n_0}{n}S^2_X \\
    &= \frac{a}{d}\varcr(\tauest{})
\end{align*}
and $\beta^\prime\cov\left(\mean{X}_1 - \mean{X}_0\right)\beta \leq \frac{a}{d}R^2\varcr(\tauest{})$.

Finally, by Assumption~\ref{ass:linearity} the difference-in-means estimator can be expressed as
\begin{equation*}
    \tauest{} = \tau + \beta^\prime\left(\mean{X}_1 - \mean{X}_0\right) + \left(\mean{\epsilon}_1 - \mean{\epsilon}_0\right)
\end{equation*}
where $\epsilon_1 = \sum_{i : Z_i = 1} \epsilon_i/n_1$ and $\epsilon_0 = \sum_{i : Z_i = 0} \epsilon_i/n_0$. Since $\beta_0 + \beta^\prime X_i$ is the projection of $Y_i(0)$ onto $(1,X)$, $\mean{X}_1 - \mean{X}_0$ and $\mean{\epsilon}_1 - \mean{\epsilon}_0$ are uncorrelated. By Assumption~\ref{ass:normality}, $\tauest{}$ and $\mean{X}_1 - \mean{X}_0$ are normally distributed, so $\mean{X}_1 - \mean{X}_0$ and $\mean{\epsilon}_1 - \mean{\epsilon}_0$ are independent. Since LGR does not affect $\mean{\epsilon}_1 - \mean{\epsilon}_0$, we have
\begin{align*}
    \varlgr(\tauest{}) &= \beta^\prime\cov\left(\mean{X}_1 - \mean{X}_0\right)\beta + \var\left(\mean{\epsilon}_1 - \mean{\epsilon}_0\right) \\
    &\leq \frac{a}{d}R^2\varcr(\tauest{}) + (1-R^2)\varcr(\tauest{}) \\
    &= \left[1-\left(1-\frac{a}{d}\right)R^2\right]\varcr(\tauest{})
\end{align*}

Hence,
\begin{equation*}
    \frac{\varcr(\tauest{}) - \varlgr(\tauest{})}{\varcr(\tauest{})} \geq \left(1 - \frac{a}{d}\right)R^2.
\end{equation*}
\end{proof}

\section{Additional Simulations}\label{app:simulations}

In addition to the simulation results in Section~\ref{sec:simulations}, we conduct a sensitivity analysis to understand whether the performance of LGR is affected by the choice of its parameters.

Both panels in the left side of Figure~\ref{fig:sensitivity} show the results for LGR when varying its temperature $\delta \in \{0.01, 0.1, 0.5, 1.0, 10.0\}$, while considering the default value for the learning rate $\eta = 1$, and compare to the performance of BRAIN with its default parameters. The lines on the top left panel show the time in seconds for each setting to sample a balanced randomization, with its shaded region corresponding to a bootstrap 95\% confidence interval. Notice that for extreme values of $\delta$, LGR's performance is undermined, because the soft assignments defined in Equation (\ref{eq:soft_treatment}) take extreme values, making the gradient in Equation (\ref{eq:grad-M}) unstable and uninformative. Meanwhile, the bottom left panel shows the bias (lines) and standard deviation (shaded region) of the difference-in-means treatment effect estimator for each setting of the rerandomization methods. Note that all of them are similar, and hence the temperature does not seem to affect it.

\begin{figure}[H]
\centering
\includegraphics[width=0.9\textwidth]{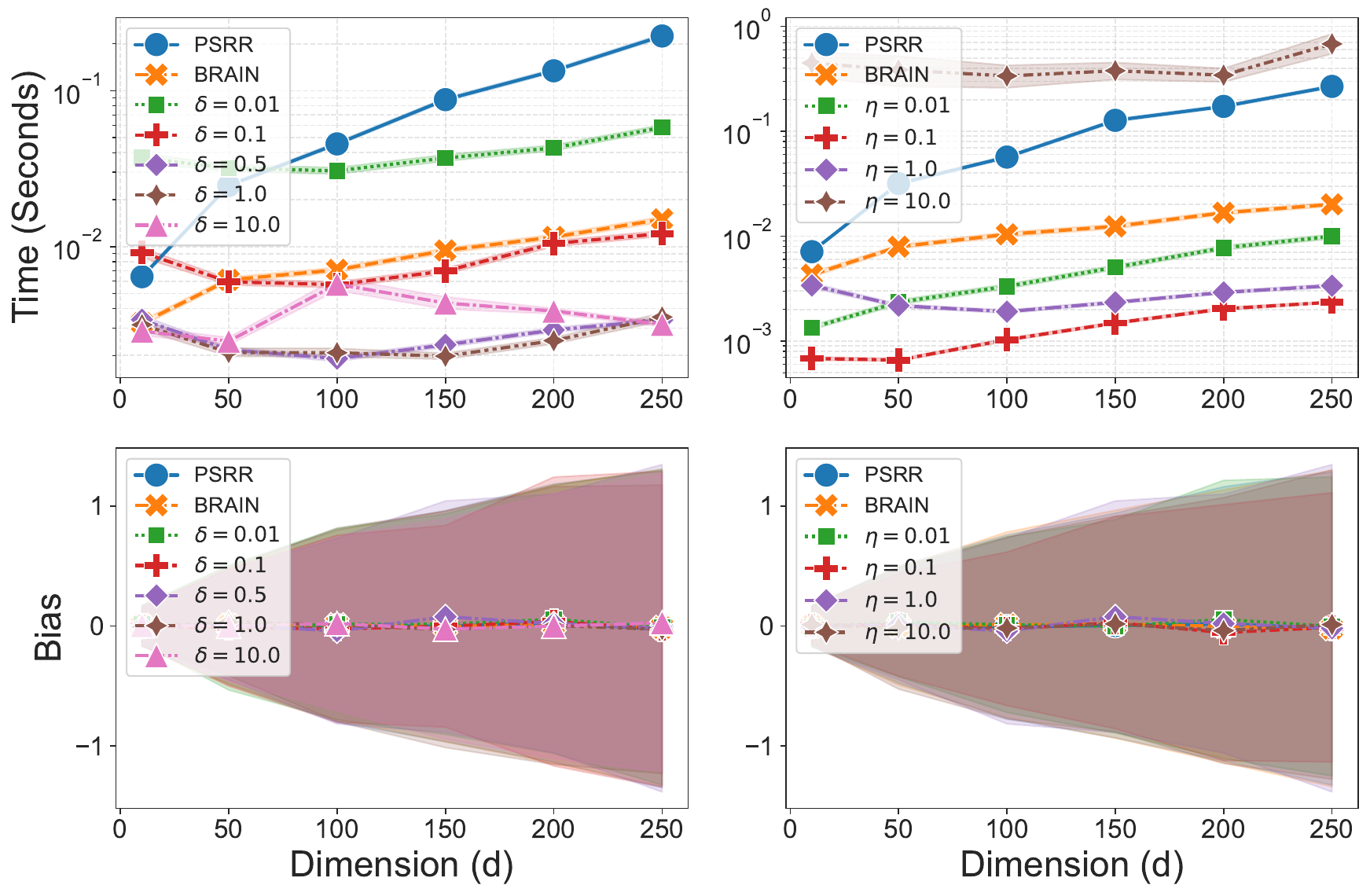}
\caption{Sensitivity analysis plot of the LGR method with respect to its temperature $\delta$ and $\eta$. Both panels in the left side show the results for LGR when varying its temperature $\delta \in \{0.01, 0.1, 0.5, 1.0, 10.0\}$, while considering the default value for the learning rate $\eta = 1$, and BRAIN method with its default parameters. The panels on the right side consider a fixed temperature at its default value $\delta = 0.5$ , but varying $\eta \in \{0.01, 0.1, 1.0, 10.0\}$.The lines on the top panels show the time in seconds for each setting to sample a balanced randomization, with its shaded region corresponding to a bootstrap 95\% confidence interval. Notice that for extreme values of $\delta$ and large values of $\eta$, LGR's performance is undermined. Meanwhile, the bottom right panel shows the bias (lines) and standard deviation (shaded region) of the difference-in-means treatment effect estimator for each setting of the rerandomization methods. Note that all of them are similar, and hence the parameters values do not seem to affect it.}
\label{fig:sensitivity}
\end{figure}

The other two panels on the right side of Figure~\ref{fig:sensitivity} show similar results. In this case, they consider LGR with default temperature $\delta = 0.5$ but varying learning rate $\eta \in \{0.01, 0.1, 1.0, 10.0\}$.  The lines in the top right panel show the time in seconds for each setting to sample a balanced randomization, with its shaded region corresponding to a bootstrap 95\% confidence interval. Notice that for large values of $\eta$, LGR's performance is undermined, this is because the SGLD update step is taking large step sizes, which ultimately does not exploit the gradient information. Meanwhile, the bottom right panel shows the bias (lines) and standard deviation (shaded region) of the difference-in-means treatment effect estimator for each setting of the rerandomization methods. Note that all of them are similar, and hence the learning rate does not seem to affect it.

\end{document}